\numberwithin{equation}{section}
\makeatletter\@addtoreset{equation}{section}
\DeclareMathSymbol{\subsetneqq}{\mathbin}{AMSb}{36}
\newtheorem {theorem}{Theorem}[section]
\newtheorem {definition}[theorem]{Definition}
\newtheorem {proposition}[theorem]{Proposition}
\newtheorem {remark}[theorem]{Remark}
\newtheorem {corollary}[theorem]{Corollary}
\newcommand{\C}{\mathbb C}
\begin{document}
\author{A. Ghanmi, A. Hafoud, Z. Mouayn}
\title[Generalized Binomial Probability]{Generalized Binomial Probability Distributions attached to Landau levels on
the Riemann sphere}
\address{{(A.G. $\&$ A.H.)} \\
Department of Mathematics, Faculty of Sciences, P.O. Box 1014\\
Mohammed V University, Agdal, 10 000 Rabat - Morocco}
\email{(A.G.) {allalghanmi@gmail.com}, (A.H.) {hafoud$_{-}$ali@yahoo.fr}}
\address{{(Z.M.)} \\
Department of Mathematics, Faculty of Technical Sciences, Sultan Moulay
Slimane University, P.O. Box 523, 23000 Béni Mellal, Morocco.}
\email{mouayn@gmail.com}
\date{\today}
\maketitle

\begin{abstract}
A family of generalized binomial probability distributions attached to
Landau levels on the Riemann sphere is introduced by constructing a kind of
generalized coherent states. Their main statistical parameters are obtained
explicitly. As application, photon number statistics related to coherent
states under consideration are discussed.
\end{abstract}

\section{Introduction}

The\textit{\ binomial states} (BS) are the field states that are
superposition of the number states with appropriately chosen coefficients
\cite{Stoler84}. Precisely, these states are labeled by points $z$ of the
Riemann sphere $\mathbb{S}^{2}\equiv \mathbb{C}\cup \left\{ \infty \right\}$%
, and are of the form
\begin{equation}
\mid z\text{, }B>=\left( 1+\left| z\right| ^{2}\right)
^{-B}\sum\limits_{j=0}^{2B}\left( \frac{\left( 2B\right) !}{j!\left(
2B-j\right) !}\right) ^{\frac{1}{2}}z^{j}\mid j>  \label{1.1}
\end{equation}
where $B\in \mathbb{Z}_{+}$ is a fixed integer parameter and $\mid j>$ are
number states of the field mode.

Define $\mu _{_{z}}$ to be $\mu _{_{z}}:=\left| z\right| ^{2}\left( 1+\left|
z\right| ^{2}\right) ^{-1}$. Then the probability for the production of $j$
photons is given by the squared modulus of the projection of the BS  $\mid
z,B>$  onto the number state $\mid j>$ as
\begin{equation}
\left| \left\langle j\mid z,B\right\rangle \right| ^{2}=\frac{(2B)!}{%
j!(2B-j)!}\mu _{_{z}}^{j}\left( 1-\mu _{_{z}}\right) ^{2B-j}.  \label{1.2}
\end{equation}
The latter is recognized as the \ binomial probability density $\mathcal{B}%
\left( 2B,\mu _{_{z}}\right) $ where $\left\{ \mu _{_{z}},\left( 1-\mu
_{_{z}}\right) \right\} $ are the probabilities of the two possible outcomes
of a Bernoulli trial \cite{Feller}.

Also, observe that the coefficients in the finite superposition of number
states in \eqref{1.1}:
\begin{equation}
h_{j}^{B}(z) :=\left( 1+\left| z\right| ^{2}\right) ^{-B}\left( \frac{\left(
2B\right) !}{j!\left( 2B-j\right) !}\right) ^{\frac{1}{2}}z^{j},\qquad
j=0,1,2,\cdots ,2B,  \label{1.3}
\end{equation}
constitutes an orthonormal basis of the null space
\begin{equation}
\mathcal{A}_{B}(\mathbb{S}^{2}):=\left\{ \varphi \in L^{2}(\mathbb{S}%
^{2}),\quad H_{B}\left[ \varphi \right] =0\text{ }\right\}  \label{1.4}
\end{equation}
of the second-order differential operator
\begin{equation}
H_{B}:=-\left( 1+\left| z\right| ^{2}\right) ^{2}\frac{\partial ^{2}}{%
\partial z\partial \overline{z}}-B\left( 1+\left| z\right| ^{2}\right)
\left( z\frac{\partial }{\partial z}-\overline{z}\frac{\partial }{\partial
\overline{z}}\right) +B^{2}\left| z\right| ^{2}-B,  \label{1.5}
\end{equation}
which constitutes (in suitable units and up to additive constant) a
realization in $L^{2}(\mathbb{S}^{2})$ of the Schrödinger operator with
uniform magnetic field on $\mathbb{S}^{2}$, with a field strength
proportional to $B$ (see \cite{FeraVese}). The given orthonormal basis $%
h_{j}^{B}(z)$ together with the reproducing kernel
\begin{equation}
K_{B}\left( z,w\right) =(2B+1)\left( 1+z\overline{w}\right) ^{2B}\left(
1+\left| z\right| ^{2}\right) ^{-B}\left( 1+\left| w\right| ^{2}\right) ^{-B}
\label{1.6}
\end{equation}
of the Hilbert space $\mathcal{A}_{B}(\mathbb{S}^{2})$ can be used to
interpret the projection of the BS $\mid z,B>$ onto the number state $\mid
j> $ mentioned in \eqref{1.2} by writing
\begin{equation}
\left\langle j\mid z,B\right\rangle =\left( K_{B}\left( z,z\right) \right)
^{-\frac{1}{2}}h_{j}^{B}(z).  \label{1.7}
\end{equation}
The space $\mathcal{A}_{B}(\mathbb{S}^{2})$ is nothing else than the
eigenspace associated with the first eigenvalue of the spectrum of $\ H_{B}$
acting on $L^{2}(\mathbb{S}^{2})$, which consists of an infinite set of
eigenvalues \textit{(spherical Landau levels) } of the form:
\begin{equation}
\epsilon _{m}^{B}:=\left( 2m+1\right) B+m\left( m+1\right) ,\qquad
m=0,1,2,\cdots ,  \label{1.8}
\end{equation}
with finite multiplicity; i.e., the associated $L^{2}-$eigenspace
\begin{equation}
\mathcal{A}_{B,m}(\mathbb{S}^{2}):=\left\{ \varphi \in L^{2}(\mathbb{S}%
^{2}),\quad H_{B}\left[ \varphi \right] =\epsilon _{m}^{B}\varphi \right\}
\label{1.9}
\end{equation}
is of finite dimension equals to $d_{B,m}:=2B+2m+1.$

Here, we take the advantage of the fact that each of the eigenspaces in %
\eqref{1.9} admits an orthogonal basis denoted $h_{j}^{B,m}(z)$, $%
j=0,1,2,\cdots ,2B+2m$, whose elements are expressed in terms of Jacobi
polynomials $P_{\eta }^{\left( \tau ,\varsigma \right) }(.)$, as well as a
reproducing kernel $K_{B,m}\left( z,w\right) $ in an explicit form (see \cite
{PZ}) to consider a set of coherent states by adopting a generalized coherent
states technique \textit{'' à la Iwata''} \cite{Iwata} as:
\begin{equation}
\mid z,B,m>=\left( K_{B,m}\left( z,z\right) \right) ^{-\frac{1}{2}%
}\sum\limits_{j=0}^{2B+2m}\frac{h_{j}^{B,m}(z)}{\sqrt{\rho _{B,m}\left(
j\right) }}\mid j>,  \label{1.10}
\end{equation}
where $\rho _{B,m}\left( j\right) $ denotes the norm square of $\
h_{j}^{B,m}(z)$ in $L^{2}(\mathbb{S}^{2}).$ The coherent states in %
\eqref{1.10} possess a form similar to \eqref{1.1} and will enables us,
starting from the observation made in \eqref{1.7}, to attach to each
eigenspace $\mathcal{A}_{B,m}\left( \mathbb{S}^{2}\right) $ a photon
counting probability distribution in the same way as for the space $\mathcal{%
A}_{B}(\mathbb{S}^{2})\equiv \mathcal{A}_{B,0}(\mathbb{S}^{2})$
through the quantities
\begin{align}
& p_{j}\left( 2B,\mu _{_{z}},m\right) =\frac{m!\left( 2B+m\right) !}{%
j!\left( 2B+2m-j\right) !}\mu _{_{z}}^{j-m}\left( 1-\mu _{_{z}}\right)
^{2B+m-j}\left( P_{m}^{\left( j-m,2B+m-j\right) }\left( 1-2\mu
_{_{z}}\right) \right) ^{2},  \label{1.11} \\
& j=0,1,\cdots ,2B+2m.  \notag
\end{align}
The latter can be considered as a kind of generalized binomial probability
distribution $X\sim \mathcal{B}\left( 2B,\mu _{_{z}},m\right) $ depending on
an additional parameter $m=0,1,2,\cdots $. Thus, we study the main
properties of the family\thinspace of probability distributions in %
\eqref{1.11} and we examine the quantum photon counting statistics with
respect\thinspace \ to the location in the Riemann sphere of the point $z$
labeling the generalized coherent states\ introduced formally in \eqref{1.10}.

The paper is organized as follows. In Section 2, we recall briefly the
principal statistical properties of the binomial states. Section 3 deals
with some needed facts on the Schrödinger operator with uniform magnetic
field on the Riemann sphere with an explicit description of the
corresponding eigenspaces. Section 4 is devoted to a coherent states
formalism. This formalism is applied so as to construct a set of generalized
coherent states attached to each spherical Landau level. In Section 5, we
introduce the announced generalized binomial probability distribution and we
give its main parameters. In section 6, we discuss the
classicality/nonclassicality of the generalized coherent states with respect
to the location of their labeling points belonging to the Riemann sphere.

\section{The binomial states}

The binomial states in their first form were introduced by Stoler \textit{et
al.} \cite{Stoler84} to define a pure state of a single mode of the
electromagnetic field for which the photon number\ density is binomial. Like
the generalized coherent states (whose the coefficients of its $j$ states
expansion are allowed to have additional arbitrary phases) a generalized
binomial state can be defined by
\begin{equation}
\mid n,\mu , \theta>=\sum\limits_{j=0}^{n}\left( \frac{n!}{j!\left( n-j\right) !}\mu
^{j}\left( 1-\mu \right) ^{n-j}\right) ^{\frac{1}{2}}e^{ij\theta }\mid j>
\label{2.1}
\end{equation}
and has as a photon counting probability
\begin{equation}
p_{j}\left( n,\mu \right) =\frac{n!}{j!\left( n-j\right) !}\mu ^{j}\left(
1-\mu \right) ^{n-j}  \label{2.2}
\end{equation}
which follows the binomial law $Y\sim $ $\mathcal{B}\left( n,\mu \right) $
with parameters $n$ and $\mu $; $n\in \mathbb{Z}_{+}$, $0<\mu <1.$ The
connection with our notations in \eqref{1.1} and \eqref{1.2} can be made by
setting $n=2B$, $z=\left| z\right| e^{i\theta }$ and $\left| z\right|
^{2}=\mu \left( 1-\mu \right) ^{-1}.$

Note that in limits $\mu \rightarrow 0$ and $\mu \rightarrow 1$ the binomial
state reduces to number states $\mid 0>$ and $\mid n>$ respectively. In a
different limit of $n\rightarrow +\infty \quad $ and $\quad \mu \rightarrow
0\quad $ with $\quad n\mu \rightarrow \lambda$, the probability distribution %
\eqref{2.2} goes to the Poisson distribution $\mathcal{P}(\lambda) $
\begin{equation}
p_{j}\left( \lambda \right) =\frac{\lambda ^{j}}{j!}e^{-\lambda },\qquad
j=0,1,2,\dots ,  \label{2.3}
\end{equation}
which characterize the coherent states of the harmonic oscillator. In fact,
and as pointed out in \cite{Stoler84} the binomial states interpolate
between \textit{number states} (nonclassical states) and \textit{coherent
states} (classical states). It partakes of the properties of both and
reduces to each in different limits.

The characteristic function of the random variable $Y\sim \mathcal{B}(n,\mu )
$ is given by
\begin{equation}
\mathcal{C}_{Y}(t)=\Big((1-\mu )+\mu e^{it}\Big)^{n}  \label{2.4}
\end{equation}
from which one obtains the mean value and the variance as
\begin{equation}
E(Y)=n\mu \quad \mbox{and}\quad Var(Y)=n\mu \left( 1-\mu \right)
\label{2.5}
\end{equation}
Therefore, the Mandel parameter, which measures deviation from the
Poissonian distribution,
\begin{equation}
Q=\frac{Var(Y)}{E(Y)}-1=-\mu,   \label{2.6}
\end{equation}
is always negative. Thus photon statistics in the the binomial states is
always \textit{sub-Poissonian}.

\begin{remark}
\ We should note that a binomial state also admits a \textit{ladder operator
definition } \cite{FuSasaki} which means that this state is an eigenstate of
a proper combination of the number operator and the annihilation operator
via the Holstein-Primakoff realization of the Lie algebra of the group $SU(2)
$.
\end{remark}

\section{An orthonormal basis of $\mathcal{A}_{B, m}(\mathbb{S}^{2}) $}

Let $\mathbb{S}^{2}\subset \mathbb{R}^{3}$\ denotes the unit sphere with the standard metric of constant Gaussian curvature $\kappa=1.$ We shall
identify the sphere $\mathbb{S}^{2}$ with the extended complex plane $%
\mathbb{C}\mathbf{\cup }\left\{ \infty \right\} $, called the Riemann
sphere, via the stereographic coordinate $z=x+iy$; $x, y\in \mathbb{R}.$ We
shall work within a fixed coordinate neighborhood with coordinate $z$
obtained by deleting the ''point at infinity'' $\{ \infty \}$. Near this
point we use instead of $z$\ the coordinate $z^{-1}.$

In the stereographic coordinate $z$, the Hamiltonian operator of the Dirac
monopole with charge $q=2B$ reads \cite[p.598]{{FeraVese}}:
\begin{equation}
H_{B}:=-\left( 1+\left| z\right| ^{2}\right) ^{2}\frac{\partial ^{2}}{%
\partial z\partial \overline{z}}-Bz\left( 1+\left| z\right| ^{2}\right)
\frac{\partial }{\partial z}+B\overline{z}\left( 1+\left| z\right|
^{2}\right) \frac{\partial }{\partial \overline{z}}+B^{2}\left( 1+\left|
z\right| ^{2}\right) -B^{2}.  \label{3.1}
\end{equation}
This operator acts on the sections of the $U(1)$-bundle with the first Chern
class $q$. We have denoted by $B\in \mathbb{Z}_{+}$\ the strength of the
quantized magnetic field. We shall consider the Hamiltonian $H_{B}$ in %
\eqref{3.1} acting in the Hilbert space $L^{2}(\mathbb{S}^{2}):=L^{2}\Big( %
\mathbb{S}^{2},(1+\left| z\right| ^{2})^{-2}d\nu (z)\Big)$, $d\nu (z)=\pi
^{-1}dxdy$ being the Lebesgue measure on $\mathbb{C}\mathbf{\equiv }%
\mathbb{R}^{2}$. Its spectrum\ consists on an infinite number of eigenvalues
(\textit{spherical Landau levels}) of the form
\begin{equation}
\epsilon _{m}^{B}:=\left( 2m+1\right) B+m\left( m+1\right) ,\quad
m=0,1,2,\cdots ,  \label{3.2}
\end{equation}
with finite degeneracy $2B+2m+1$ (see \cite[p.598]{FeraVese}). In order to
present expressions of the corresponding eigensections in the coordinate $z$,
we first mention that the shifted operator $H_{B}-B$ on $L^{2}(\mathbb{S}%
^{2})$ is intertwined with the invariant Laplacian
\begin{equation}
\Delta _{2B}:=-\left( 1+\left| z\right| ^{2}\right) ^{2}\frac{\partial ^{2}}{%
\partial z\partial \overline{z}}+2B\overline{z}\left( 1+\left| z\right|
^{2}\right) \frac{\partial }{\partial \overline{z}}  \label{3.3}
\end{equation}
acting in the Hilbert space $L^{2,B}(\mathbb{S}^{2}):=L^{2}\Big( \mathbb{S}%
^{2},(1+\left| z\right| ^{2})^{-2-2B}d\nu (z)\Big)$. Namely, we have
\begin{equation}
\mathsf{\ }H_{B}-B=\left( 1+\left| z\right| ^{2}\right) ^{-B}\Delta
_{2B}\left( 1+\left| z\right| ^{2}\right) ^{B},  \label{3.4}
\end{equation}
and therefore any ket $|\phi >$ of $L^{2,B}(\mathbf{S}^{2})$ is represented
by
\begin{equation}
\left( 1+\left| z\right| ^{2}\right) ^{-B}<z|\phi >\quad \text{in }\quad
L^{2}(\mathbf{S}^{2}).  \label{3.5}
\end{equation}
We denote by $\mathcal{A}_{B,m}(\mathbb{S}^{2})$\ the eigenspace of $H_{B}$
in $L^{2}(\mathbb{S}^{2})$, corresponding to the eigenvalue $\epsilon
_{m}^{B}$\ given in \eqref{3.2}. Then, by \cite{PZ} together with \eqref{3.5}
and the intertwining relation \eqref{3.4}, we obtain the following
orthogonal basis of $\mathcal{A}_{B,m}(\mathbb{S}^{2})$:
\begin{equation}
h_{j}^{B,m}(z):=\left( 1+\left| z\right| ^{2}\right) ^{-B}z^{j}\text{ }%
Q_{B,m,j}\left( \frac{\left| z\right| ^{2}}{1+\left| z\right| ^{2}}\right)
,\quad 0\leq j\leq 2B+2m,  \label{3.6}
\end{equation}
where $Q_{B,m,j}(\cdot )$\ is the polynomial function given by
\begin{equation}
Q_{B,m,j}\left( t\right) =t^{-j}\left( 1-t\right) ^{j-2B}\left( \frac{d}{dt}%
\right) ^{m}\left[ t^{j+m}\left( 1-t\right) ^{2B-j+m}\right] .  \label{3.7}
\end{equation}
According to the Jacobi's formula (\cite{Magnus}):
\begin{equation}
\Big(\frac{d}{dx}\Big)^{m}\Big(x^{c+m-1}(1-x)^{b-c}\Big)=\frac{\Gamma (c+m)}{%
\Gamma (c)}x^{c-1}(1-x)^{b-c-m}{_{2}F_{1}}(-m,b;c;x),  \label{3.8}
\end{equation}
${_{2}F_{1}}(a,b,c;x)$ being the Gauss hypergeometric function, it follows
that
\begin{equation}
\text{ }Q_{B,m,j}\left( t\right) =\frac{\left( m+j\right) !}{j!}{_{2}F_{1}}%
\left( -m,2B+m+1,j+1;t\right)   \label{3.9}
\end{equation}
The latter can also be expressed in terms of Jacobi polynomials via the
transformation (\cite[p.283]{Magnus})
\begin{equation}
_{2}F_{1}\left( k+\nu +\varrho +1,-k,1+\nu ;\frac{1-t}{2}\right) =\frac{%
k!\Gamma \left( 1+\nu \right) }{\Gamma \left( k+1+\nu \right) }P_{k}^{(\nu
,\varrho )}(t).  \label{3.10}
\end{equation}
So that the orthogonal basis in \eqref{3.6} reads simply in terms of Jacobi polynomial as
\begin{equation}
h_{j}^{B,m}(z)=m!\left( 1+\left| z\right| ^{2}\right)
^{-B}z^{j-m}P_{m}^{\left( j-m,2B+m-j\right) }\left( \frac{1-\left| z\right|
^{2}}{1+\left| z\right| ^{2}}\right) .  \label{3.11}
\end{equation}
Also, one obtains the norm square of the eigenfunction $h_{j}^{B,m}$ given
in \eqref{3.6} as
\begin{equation}
\rho _{B,m}\left( j\right) :=\text{ }\left| \left| h_{j}^{B,m}\right|
\right| _{L^{2}(\mathbb{S}^{2})}^{2}=\frac{m!(m+j)!(2B+m-j)!}{\left(
2B+2m+1\right) \left( 2B+m\right) !}.  \label{3.12}
\end{equation}
Finally, by Theorem 1 of \cite[p.231]{PZ} and thank to \eqref{3.5}, we
obtain the following expression for the reproducing kernel of the Hilbert
eigenspace $\mathcal{A}_{B,m}(\mathbb{S}^{2})$:
\begin{align}
K_{B,m}\left( z,w\right) =(2B+2m+1)& \frac{\left( 1+z\overline{w}\right)
^{2B}}{\left( 1+\left| z\right| ^{2}\right) ^{B}\left( 1+\left| w\right|
^{2}\right) ^{B}}  \label{3.13} \\
& \times _{2}F_{1}\left( -m,m+2B+1,1;\frac{\left| z-w\right| ^{2}}{\left(
1+\left| z\right| ^{2}\right) \left( 1+\left| w\right| ^{2}\right) }\right) .
\notag
\end{align}

\begin{remark}
Note that in the case $m=0,$ elements of the orthogonal basis reduce further
to $h_{j}^{B,0}(z)=(1+|z|^{2})^{-B}z^{j}$ and the reproducing kernel reads
simply as
\begin{equation*}
K_{B,0}(z,w)=(2B+1){(1+z\overline{w})^{2B}}{%
(1+|z|^{2})^{-B}(1+|w|^{2})^{-B}}.
\end{equation*}
\end{remark}

\begin{remark}
Note that in higher dimension, i.e., in the case of the $n$-dimensional projective space $\mathbb{P}(\C^n) (=S^1\backslash S^{2n+1})$ equipped
with the Fubini-Study metric, an explicit formulae for the reproducing kernels of the eigenspaces associated with the Schr\"odinger operator with constant magnetic field written in the local coordinates (of the chart $\C^n$) as 
\begin{equation} 
H_B := (1+|z|^2)  \left\{ \sum_{i,j=1}^n \left( \delta_{ij}+z_i \bar z_j\right)\frac{\partial^2}{\partial z_i\partial \bar z_j} -B\sum_{j=1}^n \left( z_j\frac{\partial}{\partial z_j} -\bar z_j\frac{\partial}{\partial \bar z_j}\right) -B^2\right\} + B^2
\end{equation}
have been obtained in  \cite{Hafoud}.
\end{remark}


\section{Generalized coherent states}

Let $(\mathcal{H}, \left<\cdot, \cdot\right>_{\mathcal{H}})$ be a finite $d$-dimensional functional Hilbert space with an orthonormal basis $\left\{
\phi _{n}\right\}_{n=1}^{d}$ and $\mathcal{A}^{2}$ a finite $d$-dimensional
subspace of the Hilbert space $L^{2}(\Omega, ds)$, of square integrable
functions on a given measured space $(\Omega, ds)$, with an orthogonal
basis $\left\{\Phi _{n}\right\}_{n=1}^{d}$. Then, $\mathcal{A}^{2}$ is a
reproducing kernel Hilbert space whose the reproducing kernel is given by
\begin{equation}  \label{4.1}
K(x, y):=\sum_{n=1}^{d}\frac{\Phi_{n}(x) \overline{\Phi_{n}(y)}}{\rho _{n}};
\qquad x, y\in \Omega,
\end{equation}
where we have set $\rho _{n}:=\left\|
\Phi_{n}\right\|_{L^{2}(\Omega,ds)}^{2}$. Associated to the data of $(%
\mathcal{A}^{2}, \Phi _{n})$ and $(\mathcal{H}, \phi _{n})$, we introduce
the following

\begin{definition}
We define the generalized coherent states to be the elements of $\mathcal{H}$
given by
\begin{equation}
\Phi _{x}:=\left( \omega _{d}(x)\right) ^{-\frac{1}{2}}\sum_{n=1}^{d}\frac{%
\Phi _{n}(x)}{\sqrt{\rho _{n}}}\phi _{n};\qquad x\in \Omega ,  \label{4.2}
\end{equation}
where $\omega _{d}(x)$ stands for $\omega _{d}(x):=K(x,x).$
\end{definition}

Note that the choice of the Hilbert space $\mathcal{H}$ defines a
quantization of $\Omega$ into $\mathcal{H}$ by considering the inclusion map $%
x\longmapsto \Phi _{x}$. Furthermore, it is straightforward to check that $%
\left\langle \Phi _{x},\Phi _{x}\right\rangle _{\mathcal{H}}=1$ and to show
that the corresponding coherent state transform (CST) $\mathcal{W}$ on $\mathcal{H}
,$
\begin{equation}
\mathcal{W}[f](x):=(\omega _{d}(x))^{\frac{1}{2}}\left\langle \Phi_{x},f\right\rangle _{\mathcal{H}}
;\qquad f\in \mathcal{H}, 
 \label{cst}
\end{equation}
 defines an isometry from $\mathcal{H}$ into $\mathcal{A}^{2}$. Thereby we
have a resolution of the identity, i.e., we have the following integral
representation
\begin{equation}
f(\cdot )=\int\limits_{\Omega }\left\langle \Phi _{x},f\right\rangle _{%
\mathcal{H}}\Phi _{x}(\cdot )\omega _{d}(x)ds (x)
;\qquad f\in \mathcal{H}.  \label{resId}
\end{equation}

\begin{remark}
Note that formula \eqref{4.2} can be considered as a generalization (in the
finite dimensional case) of the series expansion of the well-known canonical
coherent states
\begin{equation}
|\zeta >=\left( e^{\left| \zeta \right| ^{2}}\right) ^{-\frac{1}{2}%
}\sum_{k\geq 0}\frac{\zeta ^{k}}{\sqrt{k!}}\phi _{k}  \label{4.5}
\end{equation}
with $\phi _{k}:=|k>$\ being the number states of the harmonic oscillator.
\end{remark}

We can now construct for each spherical Landau level $\epsilon _{m}^{B}$
given in \eqref{3.2} a set of generalized coherent states (GCS) according to
formula \eqref{4.2} as
\begin{equation}
\vartheta _{z,B,m}\equiv \mid z,B,m>=\left( K_{B,m}\left( z,z\right) \right)
^{-\frac{1}{2}}\sum\limits_{j=0}^{2B+2m}\frac{h_{j}^{B,m}(z) }{\sqrt{\rho
_{B,m}\left( j\right) }}\mid \phi _{j}>  \label{4.6}
\end{equation}
with the following precisions:

\begin{itemize}
\item[$\bullet$]  $(\Omega ,ds ):=(\mathbb{S}^{2},\left( 1+\left| z\right|
^{2}\right) ^{-2}d\nu (z))$, $\mathbb{S}^{2}$ being identified with $%
\mathbb{C}\mathbf{\cup }\left\{ \infty \right\} $.

\item[$\bullet$]  $\mathcal{A}^{2}:=\mathcal{A}_{B,m}\left( \mathbb{S}%
^{2}\right) $ is the eigenspace of $H_{B}$ in $L^{2}(\mathbb{S}^{2})$ with
dimension $d_{B,m}=2B+2m+1$.

\item[$\bullet$]  $\omega (z)=K_{B,m}\left( z,z\right) =2B+2m+1$ (in view of %
\eqref{3.13}).

\item[$\bullet$]  $h_{j}^{B,m}(z)$ are the eigenfunctions given by %
\eqref{3.11} in terms of the Jacobi polynomials.

\item[$\bullet$]  $\rho _{B,m}\left( j\right) $\ being the norm square of $%
h_{j}^{B,m},$ given in \eqref{3.12}.

\item[$\bullet$]  $\mathcal{H}:=\mathcal{P}_{B+m}$ the space of
polynomials of degree less than $d_{B,m}$, which carries a unitary
irreducible representation of the compact Lie group $SU\left( 2\right) $
(see \cite{Vilenkin}). The scalar product in $\mathcal{P}_{B+m}$\ is written
as
\begin{equation}
\left\langle \psi ,\phi \right\rangle _{\mathcal{P}_{B+m}}=d_{B,m}\int%
\limits_{\mathbb{C}}ds (z)\left( 1+\left| z\right| ^{2}\right) ^{-2\left(
B+m\right) -2}\psi (z)\overline{\phi (z)}.  \label{4.7}
\end{equation}

\item[$\bullet$]  $\left\{ \phi _{j};0\leq j\leq 2B+2m\right\} $ is an
orthonormal basis of $\mathcal{P}_{B+m}$, whose elements are give explicitly
by:
\begin{equation}
\phi _{j}(\xi ):=\sqrt{\frac{(2\left( B+m\right) )!}{(2B+m-j)!(j+m)!}}\xi
^{j+m}.  \label{4.8}
\end{equation}
\end{itemize}

\begin{definition}
Wave functions of the GCS in \eqref{4.6} are expressed as
\begin{equation}
\vartheta _{z,B,m}\left( \xi \right) \equiv \left( 1+\left| z\right|
^{2}\right) ^{-B}\sum\limits_{j=0}^{2B+2m}\frac{\sqrt{m!\left( 2B+m\right)
!\left( 2B+2m\right) !}}{j!\left( 2B+2m-j\right) !}z^{j-m}P_{m}^{\left(
j-m,2B+m-j\right) }\left( \frac{1-\left| z\right| ^{2}}{1+\left| z\right|
^{2}}\right) \xi ^{j}  \label{4.9}
\end{equation}
\end{definition}

According to \eqref{resId}, the system of GCS $\mid \vartheta
_{z,B,m}>$ solves then the unity of the Hilbert space \ $\mathcal{P}_{B+m}$\
as
\begin{equation}
\mathbf{1}_{\mathcal{P}_{B+m}\mathsf{\ \ }}\mathbf{=}d_{B,m}\int\limits_{%
\mathbb{C}}d\nu (z)\left( 1+\left| z\right| ^{2}\right) ^{-2}\mid \vartheta
_{z,B,m}>\mathbf{<}\vartheta _{z,B,m}\mid \mathsf{.}  \label{4.10}
\end{equation}
They also admit a closed form \cite{Mouayn}, as
\begin{equation}
\vartheta _{z,B,m}\left( \xi \right) =\sqrt{\frac{\left( 2B+2m\right) !}{%
\left( 2B+m\right) !m!}}\left( \frac{\left( 1+\xi z\right) ^{2}}{1+\left|
z\right| ^{2}}\right) ^{B}\left( \frac{\left( \xi -\overline{z}\right)
\left( 1+\xi z\right) }{1+\left| z\right| ^{2}}\right) ^{m}.  \label{4.11}
\end{equation}

\begin{remark} Note that for $m=0,$  the expression above
reduces to
\begin{equation*}
<\xi |z,B,0>=\left( 1+\left| z\right| ^{2}\right) ^{-B}\left( 1+\xi z\right)
^{2B}.
\end{equation*}
which are wave functions of Perelomov's coherent states based on $SU(2)$ (see \cite[p.62]{perelomov}).
\end{remark}

\section{Generalized binomial probability distributions}

According to \eqref{4.2}, we see that the squared modulus of $\left\langle
\vartheta _{z,B,m},\phi _{j}\right\rangle _{\mathcal{H}}$, the projection of
coherent state $\vartheta _{z,B,m}$ onto the state $\phi _{j}$, is given by
\begin{equation}
\Big| \left\langle \vartheta _{z,B,m},\phi _{j}\right\rangle _{\mathcal{H}}%
\Big|^{2}=\bigg|\left( K_{B,m}\left( z,z\right) \right) ^{-\frac{1}{2}}\frac{%
h_{j}^{B,m}(z)}{\sqrt{\rho _{j}^{B,m}}}\bigg|^{2}=\frac{1}{\rho
_{j}^{B,m}d_{B,m}}\big|h_{j}^{B,m}(z)\big|^{2}.  \label{5.1}
\end{equation}
This is in fact the probability of finding $j$ photons in the coherent state
$\vartheta _{z,B,m}$. More explicitly, in view of \eqref{3.11}, the quantity
in \eqref{5.1} reads
\begin{equation}
\Big| \left\langle \vartheta _{z,B,m},\phi _{j}\right\rangle _{\mathcal{H}}%
\Big|^{2}=\frac{m!(2B+m)!}{j!(2B+2m-j)!}(1+|z|^{2})^{-2B}|z|^{2(j-m)}\bigg( %
P_{m}^{(j-m,2B+m-j)}\Big(\frac{1-|z|^{2}}{1+|z|^{2}}\Big)\bigg)^{2}.
\label{5.2}
\end{equation}
We denote the expression in \eqref{5.2} by $p_{j}(2B,\mu _{_{z}},m)$ for $%
j=0,1,2,\cdots $, with $\mu _{_{z}}={|}z{|^{2}}{(1+}|z{|^{2})^{-1}}$ or
equivalently $|z|^{2}=\mu _{_{z}}(1-\mu _{_{z}})^{-1}$. Motivated by quantum
probability, we then state the following

\begin{definition}
\label{BinProb} For fixed integers $B,m\in \Bbb{Z}_{+}$, the discrete
random variable $X$ having the probability distribution
\begin{equation}
p_{j}(2B,\mu _{_{z}},m)=\frac{m!(2B+m)!}{j!(2B+2m-j)!}\mu
_{_{z}}^{(j-m)}(1-\mu _{_{z}})^{2B+m-j}\Big( P_{m}^{(j-m,2B+m-j)}\big(1-2\mu
_{_{z}}\big)\Big)^{2},  \label{GBP}
\end{equation}
with $j=0,1,2,\cdots ,2B+2m$, and denoted by $X\sim \mathcal{B}(2B,\mu
_{_{z}},m)$, $0<\mu _{_{z}}<1$, will be called the generalized binomial
probability distribution associated to the weighted Hilbert space $\mathcal{A%
}_{B,m}(S^{2})$.
\end{definition}

\begin{remark}
Note that for $m=0$, the above expression in \eqref{GBP} reduces to
\begin{equation*}
p_{j}(2B,\mu _{_{z}},0)=\frac{(2B)!}{j!(2B-j)!}\mu _{_{z}}^{j}(1-\mu
_{_{z}})^{2B-j},\quad j=0,1,2,\cdots ,2B,
\end{equation*}
which is the standard binomial distribution with parameters $2B$ and $0<\mu
_{_{z}}<1.$,
\end{remark}

A convenient way to summarize all the properties of a probability
distribution $X$ is to explicit its characteristic function :
\begin{equation}
\mathcal{C}_{X}(t)=:E\left( e^{itX}\right) ,  \label{CharFct}
\end{equation}
where $t$ is a real number, $i:=\sqrt{-1}$ is the imaginary unit and $E$
denotes the expected value or the mean value. We precisely establish the
following result.

\begin{proposition}
For fixed $m=0,1,2,\cdots ,$ the characteristic function of $X\sim \mathcal{B%
}(2B,\mu _{_{z}},m)$ is given by
\begin{equation}
\mathcal{C}_{m}(t)=e^{imt}\Big([1-\mu _{_{z}}]+\mu _{_{z}}e^{it}\Big)^{2B}{P}%
_{m}^{(0,2B)}\big(1-4\mu _{_{z}}(1-\mu _{_{z}})(1-\cos (t))\big)
\label{CharFctExpl}
\end{equation}
for every $t\in \Bbb{R}$.
\end{proposition}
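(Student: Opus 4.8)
The plan is to identify $\mathcal{C}_m(t)=\sum_{j=0}^{2B+2m}e^{ijt}p_j(2B,\mu_z,m)$ with a ``rotated'' value of the reproducing kernel $K_{B,m}$, and then to extract the announced closed form from the explicit expression \eqref{3.13} together with the classical transformation \eqref{3.10}. First I would rewrite each weight via \eqref{5.1}, obtaining
\[
\mathcal{C}_m(t)=\frac{1}{d_{B,m}}\sum_{j=0}^{2B+2m}\frac{e^{ijt}\bigl|h_j^{B,m}(z)\bigr|^{2}}{\rho_{B,m}(j)},\qquad d_{B,m}=2B+2m+1=K_{B,m}(z,z).
\]
The structural fact that makes this work is the rotation covariance
\[
h_j^{B,m}\bigl(e^{i\theta}z\bigr)=e^{i(j-m)\theta}\,h_j^{B,m}(z),\qquad\theta\in\R,
\]
which is read off directly from \eqref{3.11}: the Jacobi factor there depends on $|z|^{2}$ alone, so the phase of $z$ enters only through the monomial $z^{j-m}$ (for $j<m$ the Jacobi factor supplies the compensating power of $|z|^{2}$, keeping $h_j^{B,m}$ smooth at the origin). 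Putting $z_1:=e^{it/2}z$ and $z_2:=e^{-it/2}z$, which have the same modulus, the covariance gives $e^{ijt}\bigl|h_j^{B,m}(z)\bigr|^{2}=e^{imt}\,h_j^{B,m}(z_1)\,\overline{h_j^{B,m}(z_2)}$; summing over $j$ against $\rho_{B,m}(j)^{-1}$ and recognizing the defining series \eqref{4.1} of the reproducing kernel of $\mathcal{A}_{B,m}(\mathbb{S}^{2})$ yields
\[
\mathcal{C}_m(t)=\frac{e^{imt}}{d_{B,m}}\,K_{B,m}(z_1,z_2).
\]
Equivalently this says $\mathcal{C}_m(t)=e^{imt}\langle\vartheta_{z_1,B,m},\vartheta_{z_2,B,m}\rangle_{\mathcal{H}}$, an overlap of two of the GCS \eqref{4.6} attached to points of equal modulus.

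Next I would evaluate $K_{B,m}(z_1,z_2)$ from \eqref{3.13}. One computes $z_1\overline{z_2}=|z|^{2}e^{it}$, $(1+|z_1|^{2})(1+|z_2|^{2})=(1+|z|^{2})^{2}$ and $|z_1-z_2|^{2}=4|z|^{2}\sin^{2}(t/2)=2|z|^{2}(1-\cos t)$. Expressing everything through $\mu_z=|z|^{2}(1+|z|^{2})^{-1}$, so that $1-\mu_z=(1+|z|^{2})^{-1}$ and $\mu_z(1-\mu_z)=|z|^{2}(1+|z|^{2})^{-2}$, the factor $2B+2m+1$ of \eqref{3.13} cancels the $d_{B,m}$, the remaining rational prefactor collapses to $\bigl((1-\mu_z)+\mu_z e^{it}\bigr)^{2B}$, and the argument of the Gauss function becomes $2\mu_z(1-\mu_z)(1-\cos t)$. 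This leaves
\[
\mathcal{C}_m(t)=e^{imt}\bigl((1-\mu_z)+\mu_z e^{it}\bigr)^{2B}\;{}_{2}F_{1}\bigl(-m,\,m+2B+1,\,1;\,2\mu_z(1-\mu_z)(1-\cos t)\bigr).
\]

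It remains to recast the terminating Gauss function as a Jacobi polynomial. Applying \eqref{3.10} with $k=m$, $\nu=0$, $\varrho=2B$, and with the hypergeometric argument being $(1-s)/2=2\mu_z(1-\mu_z)(1-\cos t)$, i.e., $s=1-4\mu_z(1-\mu_z)(1-\cos t)$, and noting that the normalizing factor $k!\,\Gamma(1+\nu)/\Gamma(k+1+\nu)$ equals $m!\,\Gamma(1)/\Gamma(m+1)=1$, one obtains ${}_{2}F_{1}\bigl(-m,m+2B+1,1;\cdot\bigr)=P_m^{(0,2B)}\bigl(1-4\mu_z(1-\mu_z)(1-\cos t)\bigr)$. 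Substituting this back reproduces \eqref{CharFctExpl} verbatim.

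The step I expect to demand the most care is the phase bookkeeping in the first paragraph: justifying the covariance $h_j^{B,m}(e^{i\theta}z)=e^{i(j-m)\theta}h_j^{B,m}(z)$, in particular the $-m$ in the exponent, which is exactly what produces the overall factor $e^{imt}$, and making legitimate use of the off-diagonal value $K_{B,m}(z_1,z_2)$, $z_1\neq z_2$, provided by \eqref{3.13} (this is harmless since \eqref{4.1} is a finite sum, but it must be invoked explicitly). Everything after that is routine: the algebraic simplification of \eqref{3.13} and the classical identity \eqref{3.10}. As a sanity check, $\mathcal{C}_m(0)=P_m^{(0,2B)}(1)=1$, and for $m=0$ the formula degenerates to $\mathcal{C}_0(t)=\bigl((1-\mu_z)+\mu_z e^{it}\bigr)^{2B}$, recovering \eqref{2.4} with $n=2B$.
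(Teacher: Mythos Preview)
Your proof is correct and takes a genuinely different route from the paper's. The paper rewrites the sum in terms of the Vilenkin functions $\mathcal{P}^{l}_{r,s}$, invokes the $SU(2)$ addition formula
\[
\sum_{k=-s}^{s}e^{-ik\tau}\,\mathcal{P}^{s}_{j,k}(\cos\theta_1)\,\mathcal{P}^{s}_{k,l}(\cos\theta_2)=e^{-i(j\varphi+l\psi)}\mathcal{P}^{s}_{j,l}(\cos\theta),
\]
and then untangles the resulting Euler angles $\varphi,\psi,\theta$ to extract the factor $\bigl((1-\mu_z)+\mu_z e^{it}\bigr)^{2B}$ and the Jacobi polynomial. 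You instead observe that the phase covariance $h_j^{B,m}(e^{i\theta}z)=e^{i(j-m)\theta}h_j^{B,m}(z)$ turns the characteristic sum into an off-diagonal value of the reproducing kernel, $\mathcal{C}_m(t)=e^{imt}d_{B,m}^{-1}K_{B,m}(e^{it/2}z,e^{-it/2}z)$, and then read off the answer directly from the closed form \eqref{3.13} of $K_{B,m}$ together with \eqref{3.10}. Your argument is considerably shorter and more conceptual, exploiting a result (\eqref{3.13}) already recorded in the paper; the paper's approach, by contrast, is essentially re-deriving that reproducing kernel formula through representation theory in the special case needed here. The only step you should make a touch more explicit is that the kernel expansion \eqref{4.1} is being used at off-diagonal points $z_1\ne z_2$, which is legitimate since the sum is finite and \eqref{3.13} gives $K_{B,m}(z,w)$ for all $z,w$; you already flag this, so the proof stands as written.
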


\begin{proof} Recall first that for every given fixed nonnegative integer $m$, the characteristic function $\mathcal{C}_{m}(t) $ in \eqref{CharFct} can be written as
\begin{align}
  \mathcal{C}_{m}(t) &=   \sum_{j=0}^{2B+2m}e^{ijt} p_{j}(2B,\mu _{_{z}},m) \nonumber\\ 
 & =\sum_{j=0}^{2B+2m}e^{ijt} \frac{m!(2B+m)!}{j!(2B+2m-j)!}
 \mu_{_z} ^{(j-m)} (1-\mu_{_z} )^{2B+m-j} \Big( P^{(j-m, 2B+m-j )}_m\big(1-2\mu_{_z} \big)\Big)^2 \label{ddd} ,
 \end{align}
according to the expression of $p_{j}(2B,\mu _{_{z}},m)$ given through \eqref{GBP}.
Next, by making the change $k=B+m-j$ in \eqref{ddd}, it follows 
\begin{equation}
 \mathcal{C}_{m}(t) = \sum_{k=-(B+m)}^{B+m}e^{i(B+m-k)t} \frac{m!(2B+m)!}{(B+m+k)!(B+m-k)!}
                              \mu_{_z} ^{B-k} (1-\mu_{_z} )^{B+k} \Big( P^{(B-k, B+k)}_m\big(1-2\mu_{_z} \big)\Big)^2. \label{Charact2}
\end{equation}
 Instead of the Jacobi polynomials,  it is convenient to consider the closely related
 function $\mathcal{P}_{r, s}^l(x)$ introduced in \cite[p. 270]{Vilenkin}. They can be defined through
 the formula \cite[Eq.1,  p. 288]{Vilenkin},
\begin{align}\label{Trans} {P}^{(r-s, r+s)}_{n-r}(x) = 2^r \bigg(\frac{(n-s)!(n+s)!}{(n-r)!(n+r)!}\bigg)^{1/2}(1-x)^{(s-r)/2}(1+x)^{-(s+r)/2}\mathcal{P}^n_{r, s}(x),
\end{align}
with $m=n-B$ (i.e.,  $n=B+m$) and $x=1-2\mu_{_z} $. We can then express the square of ${P}^{(B-k, B+k)}_{m}(x)$ as follows
$$ \Big({P}^{(B-k, B+k)}_{m}(1-2\mu_{_z} )\Big)^2
=  \frac{(B+m-k)!(B+m+k)!}{m!(2B+m)!}\mu_{_z} ^{-B+k}(1-\mu_{_z} )^{-B-k}
\left(\mathcal{P}^{B+m}_{B, k}(1-2\mu_{_z} )\right)^2.
$$
Therefore,  \eqref{Charact2} reduces further to
\begin{align}
 \mathcal{C}_{m}(t)  &= e^{i(B+m)t} \sum_{k=-(B+m)}^{B+m}e^{-ikt} \Big(\mathcal{P}^{B+m}_{B, k}(1-2\mu_{_z} )\Big)^2\\
 &\stackrel{(\star)}{=} (-1)^{B}e^{i(B+m)t} \sum_{k=-(B+m)}^{B+m}e^{-ik(t-\pi)} \mathcal{P}^{B+m}_{B, k}(1-2\mu_{_z} )\mathcal{P}^{B+m}_{k, B}(1-2\mu_{_z} ) \nonumber \\
 &= (-1)^{B}e^{i(B+m)t} e^{-iB(\varphi+\psi)} \mathcal{P}^{B+m}_{B, B}(\cos(\theta)). \label{ultra1}
 \end{align}
 The transition $(\star)$ above holds using the fact that (\cite[p.288]{Vilenkin})
 $$ \mathcal{P}^{l}_{j, k}(x)= (-1)^{j+k}\mathcal{P}^{l}_{k, j}(x). $$
 While the last equality can be checked easily using the addition formula  (\cite[Eq3,  p.326]{Vilenkin})
$$\sum_{k=-s}^{s}e^{-ik\tau } \mathcal{P}^{s}_{j, k}(\cos(\theta_1))\mathcal{P}^{s}_{k, l}(\cos(\theta_2)) = e^{-i(j\varphi+l\psi)} \mathcal{P}^{s}_{j, l}(\cos(\theta)).$$
Here the involved complex angles $\varphi$,  $\psi$ and $\theta$ are given through equations (8),  (8') and (8") in \cite[p. 270]{Vilenkin}. In our case,  they yield the followings
\begin{align}
  \cos(\theta)&= \cos^2(2\alpha) + \sin^2(2\alpha)\cos(t)  \\
  e^{i(\frac{\varphi+\psi}{2})}&=  \frac{-i( \cos^2(\alpha) + \sin^2(\alpha)e^{-it}) e^{it/2}}{\cos(\theta/2)}
\end{align}
for $\theta_1=\theta_2=2\alpha$,  so that
\begin{equation}
 e^{-iB(\varphi+\psi)} =   (-1)^B \Big(\cos(\theta/2)\Big)^{-2B} \Big(\cos^2(\alpha) + \sin^2(\alpha)e^{it} \Big)^{2B}e^{-iBt}.\label{Fact1}
\end{equation}
Next,  using the fact that $$2^{-s} (1+x)^{s}{P}^{(0, 2s)}_{n-s}(x) = \mathcal{P}^n_{s, s}(x), $$ which is a special case of
 \eqref{Trans},  with $s=B$,  $n-s=m$ and $x=\cos(\theta)$,  we obtain
\begin{equation}
\mathcal{P}^{B+m}_{B, B}(\cos(\theta)) = \Big(\cos(\theta/2)\Big)^{2B} {P}^{(0, 2B)}_{m}(\cos(\theta)).\label{Fact2}
\end{equation}
Finally,  by substituting \eqref{Fact1} and \eqref{Fact2} in \eqref{ultra1},  taking into account that $\sin^2(\alpha)=\mu_{_z} $ and $\cos^2(\alpha)=1-\mu_{_z} $,   we see that the characteristic function $\mathcal{C}_{m}(t) $ reads simply as
\begin{equation}
 \mathcal{C}_{m}(t)  = e^{imt} \Big(\cos^2(\alpha) + \sin^2(\alpha)e^{it} \Big)^{2B} {P}^{(0, 2B)}_{m}(\cos(\theta)) 
,   \label{ultra2}
\end{equation}
where $\cos(\theta)= 1-4\mu_{_z} (1-\mu_{_z} )(1-\cos(t))$.
\end{proof}

\begin{remark}
Note that by taking $m=0$ in \eqref{ultra2}, the characteristic function
reduces to
\begin{equation*}
\mathcal{C}_{Y}(t)=\Big(\cos ^{2}(\alpha )+\sin ^{2}(\alpha )e^{it}\Big)%
^{2B}=\Big([1-\mu _{_{z}}]+\mu _{_{z}}e^{it}\Big)^{2B}
\end{equation*}
which is the well-known characteristic function of the binomial random variable  $Y\sim \mathcal{%
B}(2B,\mu _{_{z}})$ with parameters $n=2B\in \Bbb{Z}_{+}$ and $0<\mu _{z}<1$ as in %
\eqref{2.4}.
\end{remark}

Now, as mentioned at the beginning of this section, the characteristic
function contains important information about the random variable $X$. For
example, various moments may be obtained by repeated differentiation of $%
\mathcal{C}_{m}(t)$ in \eqref{CharFctExpl} with respect to the variable $t$
and evaluation at the origin as
\begin{equation*}
E\left( X^{k}\right) =\left. \frac{1}{i^k}\frac{\partial^k }{\partial t^k}\left(C_{X}( t)\right) \right|_{t=0}.
\end{equation*}

\begin{corollary}
Let $m,2B\in \Bbb{Z}_{+}$. The mean value and the variance of $X\sim
\mathcal{B}(2B,\mu _{_{z}},m)$ are given respectively by
\begin{align}
E(X)& =m+2B\mu _{_{z}}  \label{mean} \\
Var(X)& =2B\mu _{_{z}}(1-\mu _{_{z}})+2\mu _{_{z}}(1-\mu _{_{z}})m(2B+m+1).
\label{var}
\end{align}
\end{corollary}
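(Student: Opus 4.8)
The plan is to differentiate the explicit characteristic function \eqref{CharFctExpl} and apply the moment formulae recalled just above the statement, namely $E(X)=\tfrac{1}{i}\mathcal{C}_m'(0)$, $E(X^2)=-\mathcal{C}_m''(0)$, and then $Var(X)=E(X^2)-(E(X))^2$. To keep the bookkeeping manageable I would factor $\mathcal{C}_m(t)=A(t)B(t)C(t)$ with $A(t):=e^{imt}$, $B(t):=\big([1-\mu_{_{z}}]+\mu_{_{z}}e^{it}\big)^{2B}$ and $C(t):=P_m^{(0,2B)}(g(t))$, where $g(t):=1-4\mu_{_{z}}(1-\mu_{_{z}})(1-\cos t)$, and compute the value and the first two $t$-derivatives of each factor at $t=0$. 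The factor $A$ is trivial: $A(0)=1$, $A'(0)=im$, $A''(0)=-m^2$. The factor $B$ is exactly the characteristic function of a binomial $\mathcal{B}(2B,\mu_{_{z}})$ random variable (cf.\ \eqref{2.4}), so $B(0)=1$, $B'(0)=2iB\mu_{_{z}}$ and $B''(0)=-\big(2B\mu_{_{z}}+2B(2B-1)\mu_{_{z}}^2\big)$; equivalently one reads these off from $E(Y)$ and $Var(Y)$ in \eqref{2.5}.

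The only genuine computation is the Jacobi factor $C$. The key observation is that $g(0)=1$, $g'(0)=0$ and $g''(0)=-4\mu_{_{z}}(1-\mu_{_{z}})$; because $g'$ vanishes at the origin, the chain rule collapses to $C(0)=P_m^{(0,2B)}(1)$, $C'(0)=0$, and $C''(0)=P_m^{(0,2B)\prime}(1)\,g''(0)$. I would then invoke the classical endpoint value $P_n^{(\alpha,\beta)}(1)=\binom{n+\alpha}{n}$, giving $C(0)=1$ (consistent with $\mathcal{C}_m(0)=1$), together with the differentiation rule $\tfrac{d}{dx}P_n^{(\alpha,\beta)}(x)=\tfrac12(n+\alpha+\beta+1)P_{n-1}^{(\alpha+1,\beta+1)}(x)$, which yields $P_m^{(0,2B)\prime}(1)=\tfrac12(m+2B+1)\binom{m}{m-1}=\tfrac12\,m(m+2B+1)$. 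Hence $C''(0)=-2m(m+2B+1)\mu_{_{z}}(1-\mu_{_{z}})$.

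Finally, applying the Leibniz rule to the product $ABC$ at $t=0$ and using $C'(0)=0$ gives $\mathcal{C}_m'(0)=A'(0)+B'(0)=i(m+2B\mu_{_{z}})$ and $\mathcal{C}_m''(0)=A''(0)+B''(0)+C''(0)+2A'(0)B'(0)$. The first identity yields $E(X)=m+2B\mu_{_{z}}$, which is \eqref{mean}. Substituting the values of $A''(0),B''(0),C''(0),A'(0),B'(0)$ into $Var(X)=-\mathcal{C}_m''(0)-(m+2B\mu_{_{z}})^2$ and simplifying—the decisive cancellation being $2B(2B-1)\mu_{_{z}}^2-4B^2\mu_{_{z}}^2=-2B\mu_{_{z}}^2$—produces exactly \eqref{var}. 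I expect no conceptual obstacle beyond this: the one thing to get right is recognizing that $g'(0)=0$ decouples the Jacobi factor from the first-order term and reduces its second-order contribution to a single endpoint derivative of $P_m^{(0,2B)}$ at $x=1$; the remainder is a careful but routine algebraic simplification.
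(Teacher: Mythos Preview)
Your proposal is correct and follows essentially the same approach as the paper: both differentiate the explicit characteristic function \eqref{CharFctExpl} at $t=0$, invoke the Jacobi identities $\tfrac{d}{dx}P_m^{(a,b)}(x)=\tfrac{a+b+m+1}{2}P_{m-1}^{(a+1,b+1)}(x)$ and $P_m^{(a,b)}(1)=\binom{m+a}{m}$, and then simplify. Your factorization $\mathcal{C}_m=ABC$ together with the observation $g'(0)=0$ organizes the bookkeeping more cleanly than the paper's direct computation, but the substance is the same.
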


\begin{proof}
Let recall first that for every fixed integer $m=0,1,2, \cdots,$ we have
\begin{align*}
E(X) &=\left.\frac{\partial \mathcal{C}_{m}}{i\partial t}\right|_{t=0}\\
Var(X) &= E(X^2)- [E(X)]^2 = \left.\frac{\partial^2 \mathcal{C}_{m}}{i^2\partial t^2}\right|_{t=0} -
\left[ \left.\frac{\partial \mathcal{C}_{m}}{i\partial t}\right|_{t=0}\right]^2.
\end{align*}
Thus direct computation gives rise to
\begin{align*}\label{t}
\frac{\partial \mathcal{C}_{m}}{i\partial t}(t)= \Bigg[ m +  \frac{2B \mu_{_z}  e^{it}}{\Big([1-\mu_{_z} ] + \mu_{_z}  e^{it} \Big)}  - 4\mu_{_z} (1-\mu_{_z} )\sin(t) \Bigg(  \frac{\frac{\partial {P}^{(0, 2B)}_{m}(x)}{i\partial x}{|_{x=\cos(\theta)}}}{{P}^{(0, 2B)}_{m}(\cos(\theta))} \Bigg)  \Bigg] \mathcal{C}_{m}(t)
\end{align*}
and
\begin{align*}
\frac{\partial^2 \mathcal{C}_{m}}{i^2\partial t^2}(t)
&= \left[\frac{\partial}{i\partial t}\left( m +  \frac{2B\mu_{_z}  e^{it}}{\Big([1-\mu_{_z} ] + \mu_{_z}  e^{it} \Big)}  - 4\mu_{_z} (1-\mu_{_z} )\sin(t) \Bigg(  \frac{\frac{\partial {P}^{(0, 2B)}_{m}(x)}{i\partial x}{|_{x=\cos(\theta)}}}{{P}^{(0, 2B)}_{m}(\cos(\theta))} \Bigg)  \right) \right] \mathcal{C}_{m}(t)
\\ & +   \Bigg[ m +  \frac{2B \mu_{_z}  e^{it}}{\Big([1-\mu_{_z} ] + \mu_{_z}  e^{it} \Big)}  - 4\mu_{_z} (1-\mu_{_z} )\sin(t) \Bigg(  \frac{\frac{\partial {P}^{(0, 2B)}_{m}(x)}{i\partial x}{|_{x=\cos(\theta)}}}{{P}^{(0, 2B)}_{m}(\cos(\theta))} \Bigg)  \Bigg] \frac{\partial \mathcal{C}_{m}}{i\partial t}(t) \\
&=  \left[ \frac{2B\mu_{_z}  e^{it}\Big([1-\mu_{_z} ] + \mu_{_z}  e^{it} \Big) - 2B\mu_{_z} ^2 e^{it} }{\Big([1-\mu_{_z} ] + \mu_{_z}  e^{it} \Big)^2}
     + 4\mu_{_z} (1-\mu_{_z} )\cos(t) \Bigg(\frac{\frac{\partial {P}^{(0, 2B)}_{m}(x)}{\partial x}|_{x=\cos(\theta)}} {{P}^{(0, 2B)}_{m}(\cos(\theta))} \Bigg) \right] \mathcal{C}_{m}(t)  \\
&  - 4\mu_{_z} (1-\mu_{_z} )\sin(t) \frac{\partial}{i\partial t}\Bigg(  \frac{\frac{\partial {P}^{(0, 2B)}_{m}(x)}{i\partial x}|_{x=\cos(\theta)}}{{P}^{(0, 2B)}_{m}(\cos(\theta))} \Bigg) \mathcal{C}_{m}(t)  \\
\\ & +   \Bigg[ m +  \frac{2B \mu_{_z}  e^{it}}{\Big([1-\mu_{_z} ] + \mu_{_z}  e^{it} \Big)}  - 4\mu_{_z} (1-\mu_{_z} )\sin(t) \Bigg(  \frac{\frac{\partial {P}^{(0, 2B)}_{m}(x)}{i\partial x}|_{x=\cos(\theta)}}{{P}^{(0, 2B)}_{m}(\cos(\theta))} \Bigg)  \Bigg] \frac{\partial \mathcal{C}_{m}}{i\partial t}(t)
\end{align*}
To conclude,  we have to use successively the facts that for $t=0$,  we have $\cos(\theta)=1$ and  $\mathcal{C}_{m}(0)=1$, together with
$$
\frac{\partial {P}^{(a, b)}_{m}}{i\partial x}(x)= \frac {a+b+m+1}2{P}^{(a+1, b+1)}_{m-1}(x)
\quad \mbox{and} \quad {P}^{(a, b)}_{m}(1)=\frac{\Gamma(a+m+1)}{m!\Gamma(a+1)}.$$
Thus, we have
$$E(X)=\left.\frac{\partial \mathcal{C}_{m}}{i\partial t}\right|_{t=0}= \Big( m +  2B \mu_{_z}\Big) \mathcal{C}_{m}(0) =  m +  2B \mu_{_z}. $$
We have also
\begin{align*}
\frac{\partial^2 \mathcal{C}_{m}}{i^2\partial t^2}{|_{t=0}}
&=  \left[2B\mu_{_z} ( 1  - \mu_{_z} ) +  2\mu_{_z} (1-\mu_{_z} )(2B+m+1) \Bigg(\frac{{P}^{(1, 2B+1)}_{m-1}(1)} {{P}^{(0, 2B)}_{m}(1)} \Bigg)\right] \mathcal{C}_{m}(0)
\\ & +   \Big[ m +  2B \mu_{_z}   \Big] \frac{\partial \mathcal{C}_{m}}{i\partial t}{|_{t=0}}
\end{align*}
and therefore
\begin{align*}
Var(X) &= \left.\frac{\partial^2 \mathcal{C}_{m}}{i^2\partial t^2}\right|_{t=0}   -  \Big( \left.\frac{\partial \mathcal{C}_{m}}{i\partial t}\right|_{t=0}\Big)^2\\
& =2B\mu_{_z} ( 1  - \mu_{_z} ) +  2\mu_{_z} (1-\mu_{_z} )m(2B+m+1).
\end{align*}
\end{proof}

\begin{remark}
Note that by taking $m=0$ in \eqref{mean} and \eqref{var}, we recover the
standard values
\begin{equation}
E(Y)=2B\mu _{_{z}}\quad \mbox{and}\quad Var(Y)=2B\mu _{_{z}}(1-\mu
_{_{z}}) \label{EspVar0}
\end{equation}
of the binomial probability distribution as given in \eqref{2.5}.
\end{remark}

\section{Photon counting statistics}

For an arbitrary quantum state one may ask to what extent is
''non-classical'' in a sense that its properties differ from those of
coherent states? In other words, is there any parameter that may reflect the
degree on non-classicality of a given quantum state? In general, to define a
measure of non-classicality of a quantum states one can follow several
different approach. An earlier attempt to shed some light on the
non-classicality of a quantum state was pioneered by Mandel \cite{Mandel},
who investigated radiation fields and introduced the parameter
\begin{equation}
Q=\frac{Var(X) }{E(X) }-1,  \label{MandelPar}
\end{equation}
to measure deviation of the photon number statistics from the Poisson
distribution, characteristic of coherent states. Indeed, $Q=0$ characterizes
Poissonian statistics. If $Q<0$, we have \textit{sub-Poissonian} statistics
otherwise, statistics are \textit{super-Poissonian. }

In our context and for $m=0$, as mentioned in Section 2, the fact that the
binomial probability distribution has a negative Mandel parameter, according
to \eqref{EspVar0}, and thereby the binomial states obeys sub-Poissonian
statistics.

For $m\neq 0$, we make use of the obtained statistical parameters $E(X)$ and
$Var(X)$ to calculate Mandel parameter corresponding the random variable $%
X\sim \mathcal{B}(2B,\mu _{_{z}},m).$ The discussion with respect to the
sign of this parameter gives rise to the following statement:

\begin{proposition}
\label{statics} Let $m$ and $B$ be nonegative integers and set
\begin{equation}
r_{\pm }(B,m):=\bigg(1\pm \Big( 1-\frac{1}{m(2B+m)}\Big)^{1/2}\bigg)^{1/2}.
\label{m2a}
\end{equation}
Then, $r_{-}\left( B,m\right) \leq 1\leq r_{+}\left( B,m\right) $ and the
photon counting statistics are:

\begin{itemize}
\item[i)]  Sub-Poissonian for points $z$ such that $|z|<r_{-}\left(
B,m\right) $ and $|z|>r_{+}\left( B,m\right) $.

\item[ii)]  Poissonian for points $z$ such that $|z|=r_{-}\left( B,m\right) $
or $|z|=r_{+}\left( B,m\right) $.

\item[iii)]  Super-Poissonian for $z$ such that $r_{-}\left( B,m\right)
<|z|<r_{+}\left( B,m\right) $.
\end{itemize}
\end{proposition}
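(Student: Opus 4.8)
The plan is to compute the Mandel parameter $Q = Var(X)/E(X) - 1$ explicitly from the formulas \eqref{mean} and \eqref{var} already established in the preceding Corollary, reduce the sign condition to a quadratic inequality in $\mu_{_z}$, solve that quadratic, and then translate the critical values of $\mu_{_z}$ back into values of $|z|$ via the substitution $\mu_{_z} = |z|^2(1+|z|^2)^{-1}$, i.e. $|z|^2 = \mu_{_z}(1-\mu_{_z})^{-1}$. First I would write
$$
Q = \frac{Var(X)}{E(X)} - 1 = \frac{2B\mu_{_z}(1-\mu_{_z}) + 2\mu_{_z}(1-\mu_{_z})m(2B+m+1)}{m + 2B\mu_{_z}} - 1,
$$
put everything over the common denominator $m + 2B\mu_{_z}$, and observe that the numerator $N(\mu_{_z})$ of $Q$ is a polynomial in $\mu_{_z}$ whose sign governs everything, since the denominator $m + 2B\mu_{_z}$ is strictly positive for $0<\mu_{_z}<1$. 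Expanding, $N(\mu_{_z}) = 2\mu_{_z}(1-\mu_{_z})\big(B + m(2B+m+1)\big) - m - 2B\mu_{_z}$, which after collecting powers of $\mu_{_z}$ becomes a quadratic $a\mu_{_z}^2 + b\mu_{_z} + c$; I expect the leading coefficient to be $a = -2\big(B + m(2B+m+1)\big) < 0$, so that $N$ is a downward parabola.

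Next I would locate the roots of $N$. The key algebraic simplification to watch for is that the discriminant should factor so that the two roots $\mu_\pm$ of $N(\mu_{_z})=0$ satisfy $\mu_\pm/(1-\mu_\pm) = r_\pm(B,m)^2$ with $r_\pm$ as in \eqref{m2a}; equivalently, after the substitution $|z|^2 = \mu_{_z}/(1-\mu_{_z})$ — equivalently $\mu_{_z} = |z|^2/(1+|z|^2)$ — the inequality $N(\mu_{_z}) \lessgtr 0$ should transform into a quadratic inequality in the variable $s := |z|^2$ whose roots are $r_-(B,m)^2$ and $r_+(B,m)^2$. Concretely I would substitute $\mu_{_z} = s/(1+s)$, clear the denominator $(1+s)^2$ (positive), and check that $N$ becomes proportional to $-(s^2 - 2s + \tfrac{1}{m(2B+m)})$ up to a positive constant; the roots of $s^2 - 2s + 1/(m(2B+m)) = 0$ are exactly $s = 1 \pm \big(1 - 1/(m(2B+m))\big)^{1/2} = r_\pm(B,m)^2$, which also makes transparent the claim $r_- \le 1 \le r_+$ (the roots of a monic quadratic $s^2 - 2s + \text{(something in $[0,1]$ for $m\ge1$)}$ straddle the vertex $s=1$). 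I would note separately the degenerate cases: if $m=0$ then $Q = -\mu_{_z} < 0$ always (consistent with the earlier Remark), and one should also record that $1/(m(2B+m)) \le 1$ is needed for the roots to be real and nonnegative, which holds for all integers $m,B\ge 1$.

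Finally, with the factorization in hand the three cases are immediate: since the parabola in $s$ opens downward (after absorbing the negative constant, $N>0 \iff s^2-2s+1/(m(2B+m)) < 0 \iff r_-^2 < s < r_+^2 \iff r_- < |z| < r_+$), case (iii) Super-Poissonian ($Q>0$) holds precisely on $r_-(B,m) < |z| < r_+(B,m)$, case (ii) Poissonian ($Q=0$) on the boundary $|z| = r_\pm(B,m)$, and case (i) Sub-Poissonian ($Q<0$) on the complementary set $|z| < r_-(B,m)$ or $|z| > r_+(B,m)$. The main obstacle — though it is really just bookkeeping rather than a genuine difficulty — is carrying out the algebra of the discriminant cleanly so that the somewhat opaque constant $1/(m(2B+m))$ emerges; the wise move is to keep $A := B + m(2B+m+1)$ as an abbreviation through the computation and only expand at the very end, and to double-check the factorization by verifying that $r_-^2 r_+^2 = 1/(m(2B+m))$ and $r_-^2 + r_+^2 = 2$ match the product and sum of roots predicted by the quadratic $s^2 - 2s + 1/(m(2B+m))$.
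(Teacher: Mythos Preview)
Your approach is essentially the paper's: both write $Q=-T_m(\mu_{_z})/(m+2B\mu_{_z})$ with
\[
T_m(\mu_{_z})=2\bigl(B+m(2B+m+1)\bigr)\mu_{_z}^2-2m(2B+m+1)\mu_{_z}+m,
\]
and read off the three regimes from the sign of $T_m$. The only tactical difference is that the paper solves $T_m=0$ in the variable $\mu_{_z}$ and leaves the conversion to $|z|$ to the reader, whereas you substitute $\mu_{_z}=s/(1+s)$ first and solve directly in $s=|z|^2$.

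One caution on the step you plan to ``check'': clearing $(1+s)^2$ from $T_m\bigl(s/(1+s)\bigr)$ actually gives
\[
(2B+m)\,s^{2}-2m(2B+m)\,s+m,
\]
i.e.\ $s^{2}-2ms+m/(2B+m)$ after normalizing, \emph{not} $s^{2}-2s+1/(m(2B+m))$. Its roots are $s=m\bigl(1\pm\sqrt{1-1/(m(2B+m))}\,\bigr)$, which matches \eqref{m2a} only when $m=1$; for $m\ge 2$ the stated formula is off by a factor of $m$. Your own Vieta check would flag this: the product of roots is $m/(2B+m)$, not $1/(m(2B+m))$, and their sum is $2m$, not $2$. (The paper's proof is no more careful at this point: it records the roots of $T_m$ in $\mu_{_z}$ using $d_{B,m}=2B+2m+1$ where the coefficient actually appearing is $2B+m+1$, and then leaves the passage to $r_{\pm}$ implicit.) The ordering $r_{-}\le 1\le r_{+}$ survives with the corrected quadratic since its value at $s=1$ is $1-2m+m/(2B+m)\le 0$ for all integers $m\ge 1$, $B\ge 0$.
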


\begin{proof} Assume that $m\ne 0$. Making use of \eqref{mean} and
\eqref{var}, we see that the Mandel parameter corresponding to the random
variable $X\sim \mathcal{B}\left( 2B,\mu _{_{z}},m\right) $ can be written
as follows $Q(X)=-T_{m}(\mu _{_{z}})/(2B\mu _{_{z}}+m)$, where we have set
\begin{align}
T_{m}(\mu _{_{z}})& =2(B+m[2B+m+1])\mu _{_{z}}^{2}-2m[2B+m+1]\mu _{_{z}}+m
\label{Mandel} \\
& =\Big( \mu _{_{z}}-\frac{md_{B,m}}{2(B+md_{B,m})}\Big)^{2}-\frac{%
m(d_{B,m}-1)(2Bm+m^{2}-1)}{4(B+md_{B,m})^{2}}
\end{align}
with $d_{B,m}:=2B+2m+1$. Hence, it is clear that $T_{m}(\mu _{_{z}})=0$,
viewed as second degree polynomials in $\mu _{_{z}}$, admits exactly two
real solutions given by
\begin{equation}
  \mu _{_{z}}^{\pm }(B,m):=\frac{md_{B,m}}{2(B+md_{B,m})}\left( 1\pm \left( 1-%
\frac{2(B+md_{B,m})}{md_{B,m}^{2}}\right) ^{1/2}\right) .  \label{roots}
\end{equation}
Now, assertions $\left( i\right) ,\left( ii\right) $ and $\left( iii\right) $
follow by discussing the sign of the parameter $Q_{m}(\mu _{_{z}})$ (i.e.,
the sign of $-T_{m}(\mu _{_{z}})$) with respect to the modulus of $z\in %
\mathbb{C}$ $\cup $ $\{\infty \}$, keeping in mind that $|z|^{2}=\mu
_{_{z}}/(1-\mu _{_{z}})$.
\end{proof}

The figure below illustrates the quantum photon counting statistics with
respect to the location in the extended complex plane of the point $z$
labeling the generalized coherent states $\vartheta _{z,B,m}$ discussed in
Proposition \ref{statics}. Here $r_{\pm}:=r_{\pm }(B,m)$ are as in \eqref{m2a}.

\newpage
\vspace*{-2cm}
\begin{figure}[htp]
\centering
\includegraphics[width=0.7\textwidth, height=0.9\textwidth]{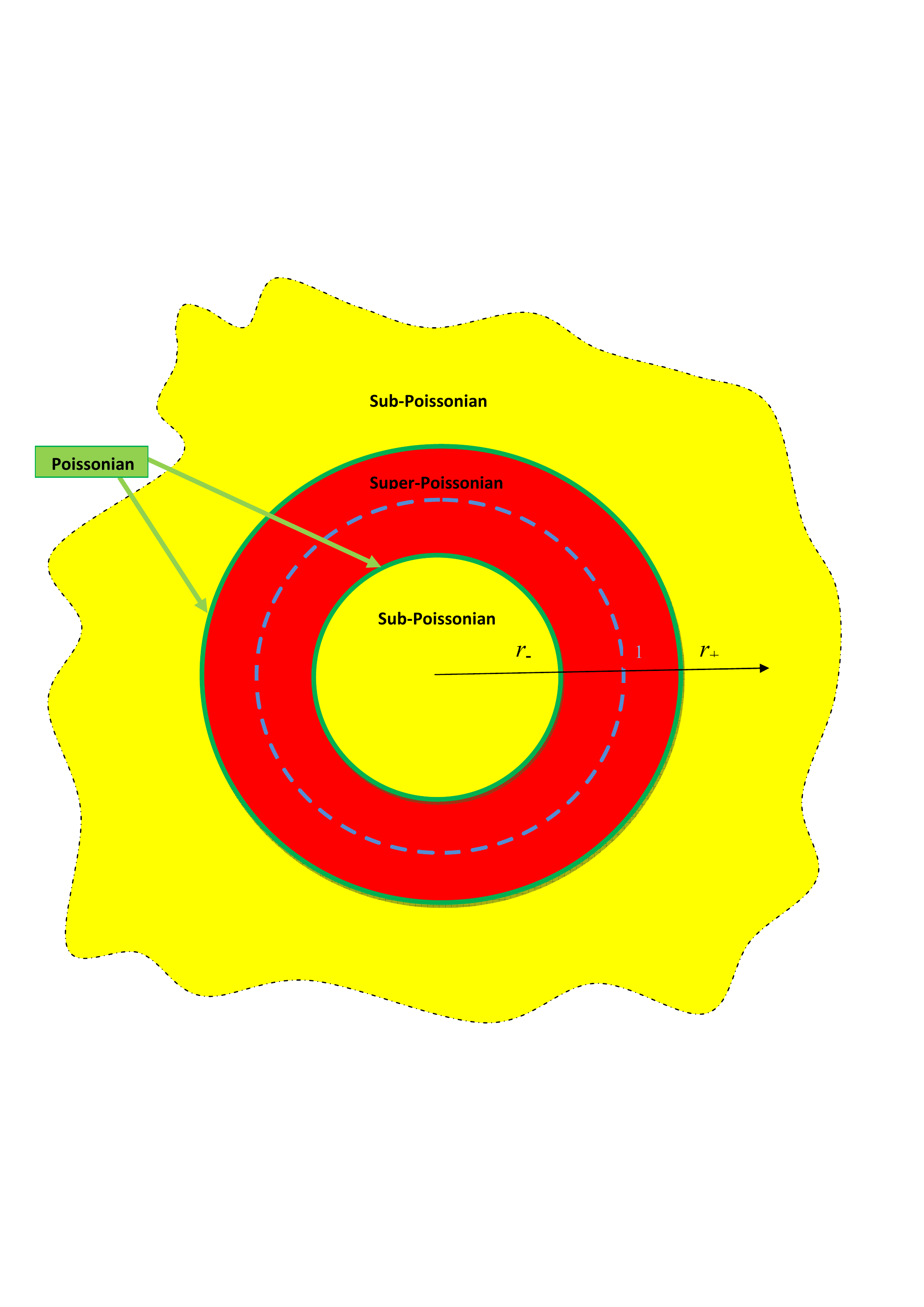}
\vspace*{-2cm}
\end{figure}


\end{document}